\newtheorem{theorem}{Theorem}
\newtheorem{defn}[theorem]{Definition}
\newtheorem{lemma}[theorem]{Lemma}
\newtheorem{prop}[theorem]{Proposition}
\newcommand{\eps}{\varepsilon}
\newcommand{\e}{1+\eps}
\newcommand{\abc}[3]{\ensuremath{\textup{#1}|\,#2\,|#3}}
\newcommand{\I}{\mathcal{I}}
\newcommand{\C}{\mathcal{C}}
\newcommand{\opt}{\textup{\sc Opt}}
\newcommand{\OPT}{\opt}
\newcommand{\MS}{\textup{\sc MS}}
\newcommand{\alg}{\textup{\sc Alg}}
\begin{document}
\global\long\def\oe{1+O(\epsilon)}

\title{Competitive-Ratio Approximation Schemes for Minimizing the
  Makespan 
in the Online-List Model}

\author{Nicole Megow\thanks{Department of Mathematics, Technische Universit{\"a}t
       Berlin, Germany.  Email: \texttt{nmegow@math.tu-berlin.de}. Supported by the German Science Foundation (DFG) under contract  ME 3825/1.}%
     \and Andreas Wiese\thanks{Max-Planck-Institut für Informatik,
       Saarbrücken, Germany. Email: \texttt{wiese@mpi-inf.mpg.de}.}
}
\date{}

\maketitle

\begin{abstract}
  We consider online scheduling on multiple machines for jobs arriving
  one-by-one with the objective of minimizing the  makespan. For any
  number of identical parallel or uniformly related machines, we provide a competitive-ratio approximation
  scheme that computes an online algorithm whose competitive ratio is arbitrarily
  close to the best possible competitive ratio. We also determine this value
  up to any desired accuracy. This is the first application of competitive-ratio approximation
  schemes in the online-list model. The result proves the
  applicability of the concept in different online models. We expect that it fosters further
  research on other online problems. 
\end{abstract}

\section{Introduction}

Online scheduling problems have been studied extensively for more than
two decades~\cite{sgall98,pruhs04}. One of the most extensively
investigated problems among them is the makespan minimization
problem with jobs arriving one-by-one: We are given~$m$ identical parallel machines, and we assume
throughout the paper that~$m$ is an arbitrary but fixed constant. The set of
jobs~$J=\{1,2,\ldots\}$ with integral processing times~$p_j\geq 1$~($j\in
J$) is presented to the online algorithm one after the other. Once a
job is present, it must be assigned without splitting, immediately, and irrevocably to a
machine before the next job is revealed. This model for revealing
online information one-by-one is called  the {\em online-list}
model~\cite{pruhs04}. The goal is to minimize the makespan, that is, the last
completion time of all currently present jobs. Using the standard
three-field notation~\cite{grahamLLR79}, we denote the problem as the online-list variant
of~\abc{Pm}{\!\!}{C_{\max}}. We also consider the more general model of
uniformly related machines~\abc{Qm}{\!\!}{C_{\max}}, where each
machine~$i\in\{1,\ldots,m\}$ is given a speed~$s_i$ and the execution
time of job~$j$ on machine~$i$ is~$p_j/s_i$. 

The performance of online algorithms is typically assessed by {\em
  competitive analysis}~\cite{sleatorT85,karlinMRS88} which
determines the worst-case performance compared to an optimal offline
algorithm. Let an instance~$I$ be defined by a set of jobs~$J$ with
processing times~$p_j$~($j\in J$), and~$m$, the number of available
machines. And let~$\I_m$ be the set of instances with~$m$
machines. We call an online algorithm $\rho(m)$-{\em competitive}
if, for any problem
instance~$I\in \I_m$, it achieves a solution with cost~$\alg(I) \leq \rho(m)
\cdot \opt(I)$, where~$\alg(I)$ and $\opt(I)$ denote the solution
value of the online and an optimal offline algorithm, respectively,
for the same instance~$I$. The \emph{competitive ratio}~$\rho_{\alg}(m)$ of~$\alg$ is the infimum
over all~$\rho$ such that~$\alg$ is~$\rho$-competitive. The minimum
competitive ratio~$\rho^*(m)$ achievable by any online algorithm for
instances in $\I_m$ is called {\em optimal for~$m$ machines}. The
  optimal competitive ratio over all number of machines
  is~$\rho^*:=\max_{m\in \mathbb{N}} \rho^*(m)$.

Only recently, the concept of competitive-ratio approximation schemes
was introduced in~\cite{guentherMMW13}. Such an approximation scheme
is a procedure that computes a nearly optimal online algorithm and at the same time provides a
nearly exact estimate of the optimal competitive ratio. 
The general definition (without distinguishing by~$m$) is as follows.
\begin{defn}
  A {\em competitive-ratio approximation scheme} computes for a
  given~$\eps>0$ an online algorithm~$\alg$ with a competitive ratio~$\rho_{\alg}
  \leq (1+\eps)\rho^*$. Moreover, it determines a value~$\rho'$ such
  that $\rho'\leq \rho^* \leq (1+\eps) \rho'$.
\end{defn}

In this paper we provide a competitive-ratio approximation scheme for
the online-list variant of makespan minimization on identical parallel
and uniformly related machines for any number of machines. This is the first competitive-ratio approximation
scheme for a problem in the {\em online-list} model in contrast to
previous work in the so-called  {\em online-time}
model~\cite{pruhs04}. In the latter model jobs are revealed to the algorithm
online over time at their individual release date. Regarding the
decision making process, an online algorithm has more freedom in this model, as it is
allowed to postpone decisions or even revoke them as long as the jobs have
not been executed. 

\subsection{Related Work}

The online-list makespan minimization
problem has been studied extensively---mainly, on identical parallel machines. The classical list scheduling
algorithm with competitive ratio~$2-1/m$~\cite{graham66} is optimal
for~$m \in {2,3}$~\cite{faigleKT89}. For~$m\geq 4$ better algorithms
have been proposed and improved general lower bounds were shown in a
series of
works~\cite{faigleKT89,FleischerWahl2000,RudinChandrasekaran2003,GalambosW93,ChandrasekaranCGNV97,BartalFKV95,KargerPT96,BartalKR94,Albers99}. The
currently best known bounds on the optimal competitive ratio~$\rho^*(m)$  for some
particular values of~$m$ are~$\rho^*(4)\in[1.732,1.733]$,
$\rho^*(5)\in[1.770,1.746]$, $\rho^*(6)\in[1.8, 1.773]$, \ldots,
$\rho^*\in [1.88,1.9201]$.


On uniformly related machines the gap is much larger. The lower bound
on the competitive ratio for an arbitrary number of machines
is~$2.564$~\cite{ebenlendrS12}, while the currently best known upper bound
is~$5.828$~\cite{bermanCK00}. Interestingly, the special case of two
related machines is completely solved, meaning that the exact
competitive ratio known~\cite{choS80} and even stronger, the exact
ratio for any pair of speeds is known~\cite{epsteinNSSW01,wenD98}.

We also remark that the preemptive variant of the identical machine problem is completely
solved and the optimal competitive ratio for any number of
machines~\cite{chenVW95} is known. For uniformly related machines, an
optimal online algorithm is known for any number of machines and any combination of
speeds~\cite{ebenlendrJS09}. Interestingly, from their linear
programming based approach it is not clear how to derive the actual value of the
competitive ratio except for~$m\in\{3,4\}$. 

Competitive-ratio approximation schemes were introduced by
G\"unther et al.~in~\cite{guentherMMW13}. They focussed on scheduling
problems in the online-time model and provide 
such schemes for various scheduling problems~\abc{Pm}{r_j,
  (pmtn)}{\sum w_jC_j}, \abc{Qm}{r_j, (pmtn)}{\sum w_jC_j}~(assuming a constant range of machine speeds
without preemption), and \abc{Rm}{r_j,pmtn}{\sum w_jC_j}. They also
consider minimizing the makespan, $C_{\max}$, and~$\sum_{j\in
  J}w_jf(C_j)$, where~$f$ is an arbitrary monomial function with fixed
exponent. Subsequently, Kurpisz et al.~\cite{kurpiszMG12} showed how to
construct competitive-ratio approximation schemes for
\abc{Rm}{r_j}{C_{\max}}, makespan minimization in a job shop
problem, and scheduling with delivery times---again, all in the online-time model. 

We are not aware of any publication of similar results for the
online-list model\footnote{However, as of writing this we got contacted by another
group of researchers~\cite{chenYZ13} who obtained similar results as
ours, independently.}. Notice that the results in~\cite{ebenlendrJS09}
are conceptually strongly related. The main difference is that our
approximation scheme provides the algorithmic means to compute the actual value of the optimal
competitive ratio (up to some error), whereas this remains open for the
algorithm in~\cite{ebenlendrJS09} even though it is provably optimal. 

\subsection{Our Results}

In this paper we provide competitive-ratio approximation schemes for
the online-list variants of makespan minimization on identical parallel
and uniformly related machines for any number of machines, that
is,~\abc{Pm}{\!\!}{C_{\max}} and~\abc{Qm}{\!\!}{C_{\max}}. 
More precisely, given~$\eps>0$ and an~$m\in \mathbb{N}$, we provide an
online algorithm~$\alg(m)$ with a competitive ratio~$\rho_{\alg}(m)
\leq (1+\eps)\rho^*(m)$. Moreover, it determines a value~$\rho'$ such
that $\rho'\leq \rho^*(m) \leq (1+\eps) \rho'$.

On a high level, we use a similar approach as in~\cite{guentherMMW13}.
We first simplify and structure the input without changing the
instance too much, and then we reduce the
complexity of possible online algorithms by interpreting them as an
algorithm map that bases its decisions only on the currently
unfinished jobs and the schedule history. The key insight is that a
very limited (constant) part of the schedule history is sufficient to
take decisions which are close to optimal. In contrast to the previous
work in the online-time model, our amount of history that has to be
considered depends on the size of the currently largest revealed job instead of
the time at which the last jobs where released.

The main purpose of this paper is to provide a proof of concept for competitive-ratio
approximation schemes and to show that it is also applicable to online
problems in the online-list model. The actual gaps between upper and lower bounds on the optimal competitive ratios
are rather small on identical parallel machines, but this is one of the most classical online-list
problems. Since many online problems follow the online-list paradigm,
we hope that this work fosters further research on competitive-ratio
approximation schemes.

\paragraph{Outline of the paper.} We first consider identical parallel
machines in Section~\ref{sec:identical}. Then we argue on how to extend
this technique to uniformly related machines in
Section~\ref{sec:uniform}. We conclude with open questions and
further research potential.

\section{Identical Parallel Machines}
\label{sec:identical}

We give a competitive-ratio approximation scheme for the online-list
variant of~\abc{Pm}{\!\!}{C_{\max}}. We first give some transformations that simplify the
input and reduce the structural complexity of online schedules. Then
we use the an abstract view on online algorithms to reduce complexity
further and to describe the approximation scheme. 

\subsection{Restrictions at $\e$ loss}

We will use the terminology that {\em at~$\e$ loss we can restrict} to instances or schedules with certain properties.
This means that we lose at most a factor~$\e$ in the objective value, as~$\eps\rightarrow 0$, 
by limiting our attention to those instances. We bound several relevant
parameters by constants. If not stated differently, any mentioned
constant depends only on~$\eps$ and~$m$. 

In the online-list model we refer to an iteration for each job
arrival. We will slightly abuse notation and refer to the iteration in
which job~$j$ is revealed as iteration~$j$. For an online algorithm
$A$, an instance $I$, and an iteration $j$, denote by $A_{j}(I)$ the
makespan of the schedule obtained after iteration $j$ when processing
instance $I$ by algorithm $A$. Furthermore, we define~$p(J):=\sum_{j\in J} p_j$.

The first observation has been made already in other contexts, e.g.,
in~\cite{afrati99} for minimizing~$\sum w_jC_j$.
\begin{prop}
  \label{lem:p_j-power-1+eps}At $\e$ loss we can restrict to instances
  where all $p_{j}$ are powers of $\e$.
\end{prop}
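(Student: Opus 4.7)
The plan is to reduce an arbitrary instance to one with power-of-$(1+\eps)$ processing times by rounding each $p_j$ up to the nearest such power. Concretely, given $I$ with processing times $p_j$, I define $I'$ by $p'_j := (1+\eps)^{\lceil \log_{1+\eps} p_j \rceil}$, so that $p_j \le p'_j \le (1+\eps)\, p_j$ for every $j$. Since the rounding depends only on $p_j$ itself, it can be performed on-the-fly as each job is revealed; this is what makes the reduction compatible with the online-list model.

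Given any online algorithm $A'$ that is $\rho_{A'}$-competitive on rounded instances, I would construct an online algorithm $A$ for arbitrary instances as follows: when job $j$ with processing time $p_j$ arrives, compute $p'_j$ and assign $j$ to the machine that $A'$ would pick for a job of length $p'_j$, given the rounded history. This defines a valid online-list algorithm. The assignment produced by $A$ on $I$ and by $A'$ on $I'$ is the same, so on every machine the total load under $A$ is at most the total load under $A'$, yielding $A_j(I) \le A'_j(I')$ for every iteration $j$. Conversely, any feasible schedule for $I$ is feasible for $I'$ with makespan blown up by at most $1+\eps$, so $\opt(I') \le (1+\eps)\,\opt(I)$.

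Chaining these inequalities with the competitiveness of $A'$,
\[
A_j(I) \;\le\; A'_j(I') \;\le\; \rho_{A'}\cdot \opt(I') \;\le\; (1+\eps)\,\rho_{A'}\cdot \opt(I),
\]
for every iteration $j$, hence $\rho_A \le (1+\eps)\,\rho_{A'}$. Taking $A'$ to be an (almost-)optimal algorithm on rounded instances then gives an algorithm for general instances whose competitive ratio is at most a factor $1+\eps$ worse, which is exactly the claimed statement.

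The argument is essentially routine, and I do not expect a substantial obstacle. The only point that requires care is that the $(1+\eps)$ loss must be extracted cleanly in the competitive-ratio framework rather than in a purely offline approximation sense: the rounding has to be online, the per-iteration makespan must be monotone in the processing times (so $A_j(I) \le A'_j(I')$), and $\opt$ must also be close under rounding. All three ingredients are needed simultaneously, and together they give the $(1+\eps)$ factor rather than $(1+\eps)^2$ or worse.
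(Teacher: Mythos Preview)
Your argument is correct and is precisely the standard rounding reduction the paper has in mind; the paper itself does not spell out a proof but simply cites the observation as known (e.g., from~\cite{afrati99}). Your care in verifying that the rounding is online, that $A_j(I)\le A'_j(I')$ per iteration, and that $\opt(I')\le(1+\eps)\opt(I)$ is exactly what is needed, and the resulting single $(1+\eps)$ factor is tight.
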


In order to simplify the construction of our algorithms, we can actually
restrict to instances with a very simple and special structure.

\begin{lemma}
  \label{lem:few-jobs-each-processing-time}
  At $\e$ loss we can restrict to instances where for each $k\in\mathbb{N}$ there are at most $\frac{m}{\epsilon^{3}}$
  jobs $j$ with $p_{j}=(\e)^{k}$.
\end{lemma}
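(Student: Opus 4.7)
My plan is to prove the reduction by extension: I will show that any online algorithm $\tilde A$ which is $\rho$-competitive on the restricted class of instances (at most $m/\eps^3$ jobs per size class) can be turned into an online algorithm $A$ for all instances that is $(1+\eps)\rho$-competitive. Together with the trivial inequality $\rho^*_{\mathrm{restricted}} \leq \rho^*$ this proves the claim. Throughout I use Proposition~\ref{lem:p_j-power-1+eps} and assume every $p_j$ is a power of $1+\eps$, so that size classes are well-defined.

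The crucial preparation is the \emph{tiny-class} observation: if $N_k > m/\eps^3$ for some $k$, then $\OPT(I) \geq N_k(1+\eps)^k/m > (1+\eps)^k/\eps^3$, so $(1+\eps)^k < \eps^3\OPT(I)$. Every job in an overfull class is therefore negligible compared to the optimum, which is what will let us aggregate them.

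To define $A$, I maintain a virtual restricted instance $\hat I$ on which $\tilde A$ runs. When a class-$k$ job arrives in $I$: if $\hat I$ still contains fewer than $m/\eps^3$ class-$k$ jobs, insert it into $\hat I$ and let $A$ copy $\tilde A$'s decision. Otherwise, \emph{open a bundle}: the next $\lceil 1/\eps\rceil$ class-$k$ arrivals (including this one) form a bundle, and I insert into $\hat I$ a single ``bundle job'' of size $\lceil 1/\eps\rceil(1+\eps)^k$ rounded up to the next power of $1+\eps$ (which lives in class $k+c$ for $c=\lceil \log_{1+\eps}(1/\eps)\rceil$); I query $\tilde A$ for its machine and commit every actual arrival of the bundle to that machine. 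If the new bundling causes class $k+c$ to become overfull, I recurse at the next level. Since each bundling step shrinks a class count by a factor $\eps$, the recursion terminates after $O(1)$ levels and the resulting $\hat I$ is genuinely restricted, so $\tilde A(\hat I)\leq \rho\,\OPT(\hat I)$.

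The hardest part will be closing the two-sided estimate $A(I)\leq (1+\eps)\rho\,\OPT(I)$. One direction, $A(I)\leq \tilde A(\hat I)$, is comparatively easy: on every machine, the true load produced by $A$ is at most the load $\tilde A$ sees, because a bundle only fills up to its (rounded-up) virtual size. The other direction, $\OPT(\hat I)\leq (1+O(\eps))\OPT(I)$, is subtler. Bundling preserves total load up to the per-level rounding factor $1+\eps$ across $O(1)$ levels, giving $\sum \hat p \leq (1+O(\eps))\sum p$. Lifting this from total load to makespan uses the tiny-class property once more: one takes an optimal schedule of $I$ and inserts the bundle jobs greedily into its free capacity, and since each bundle job has size at most $O(\eps^2)\OPT(I)$ while the total bundle volume is $(1+O(\eps))\sum p/m$, a geometric-series argument shows that this insertion inflates the makespan by at most $O(\eps)\OPT(I)$. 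Combining the two sides yields $A(I)\leq \rho\,\OPT(\hat I)\leq (1+\eps)\rho\,\OPT(I)$ after rescaling $\eps$, which gives the claim.
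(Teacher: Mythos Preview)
Your overall strategy---extend a restricted-instance algorithm $\tilde A$ to all instances via a virtual instance $\hat I$, then prove $A(I)\le\tilde A(\hat I)$ and $\OPT(\hat I)\le(1+O(\eps))\OPT(I)$---matches the paper's. The gap is in the bundling mechanism. Your claim that the recursion terminates after $O(1)$ levels is false: the depth depends on the instance, not only on $\eps$ and $m$. Take $m$ fixed and an instance consisting of $N$ jobs all of size~$1$. After each bundling round the class count shrinks by roughly a factor~$\eps$, so you need $\Theta(\log_{1/\eps} N)$ rounds before the top class has at most $m/\eps^3$ items. Each full bundle inflates the contained processing volume by the fixed factor $\alpha=(1+\eps)^c/\lceil 1/\eps\rceil$, which for generic~$\eps$ is strictly greater than~$1$; hence $p(\hat I)/p(I)$ grows like $\alpha^{\Theta(\log N)}=N^{\Theta(1)}$, and the inequality $\OPT(\hat I)\le(1+O(\eps))\OPT(I)$ fails. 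The same blow-up invalidates the assertion that every bundle job has size $O(\eps^2)\OPT(I)$: top-level bundles can far exceed $\OPT(I)$.

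The paper avoids recursion altogether by sizing each container not as a fixed multiple of the overflowing job but as roughly $\eps^2\cdot\OPT(I'_j)$, where $I'_j$ is the virtual instance so far. Since a container of size $p\ge\eps^2\OPT(I'_j)$ can coexist with at most $m/\eps^2<m/\eps^3$ other jobs of size~$p$, the container's own size class never overflows, so no second-level bundling is ever needed. The container is then filled greedily with \emph{all} subsequent jobs of size at most the triggering job's size (not just those of one class), and at most one container per size class is less than $(1-\eps)$-full at any time; this is exactly what makes the geometric-series bound on unfilled containers go through. With this single change your outline is essentially the paper's proof.
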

\begin{proof}
  Suppose that we have an online algorithm $A'$ which achieves a
  competitive ratio of $\rho_{A'}(m)$ on instances with at most $\frac{m}{\epsilon^{3}}$
jobs, each with processing time~$p_{j}=(\e)^{k}$ for each
$k\in\mathbb{N}$. Based on $A'$ we construct an online algorithm $A$ for arbitrary instances
(assuming that processing times are powers of $\e$) with a competitive
ratio of $\rho_{A}(m)\le(\oe)\rho_{A'}(m)$. 

Suppose that we are given an~(arbitrary) instance $I$ where all $p_{j}$ are powers of $\e$.
We construct an instance $I'$ which we present to $A'$. Based on the
schedule $A'(I')$ we construct the schedule $A(I)$. As long as for
each $k\in\mathbb{N}$ at most $\frac{m}{\epsilon^{3}}$ jobs $j$
with $p_{j}=(\e)^{k}$ are released, the instances $I$ and $I'$
are identical and we define $A(I)$ to be identical to $A'(I')$.
Now suppose that in some iteration $j$ a job $j$ with $p_{j}=(\e)^{k}$
is revealed after there where released already $\frac{m}{\epsilon^{3}}$ other
jobs with the same processing time. Let~$I'_j$ denote the instance up
to job~$j$. Observe that
$\opt(I'_{j})\ge\frac{1}{\epsilon^{3}}(\e)^{k}$. Let
$p:=(\e)^{\left\lceil \log_{\e}\epsilon^{2}\cdot
    \opt(I'_{j})\right\rceil } \geq \eps^2 \cdot
    \opt(I'_{j})$.

We observe that so far at most $\frac{m}{\epsilon^{2}}<\frac{m}{\epsilon^{3}}$
jobs of size $p$ have been released since otherwise $\OPT(I'_{j})\ge\frac{1}{m}(\frac{m}{\epsilon^{2}}+1)\cdot p>\frac{1}{m}\cdot\frac{m}{\epsilon^{2}}\cdot\epsilon^{2}\OPT(I'_{j})=\OPT(I'_{j})$.
Instead of $j$, in instance $I'$ we release a new job $j'$ with
$p_{j'}=p$. Suppose that algorithm $A'$ assigns $j'$ on machine
$i$. Then, algorithm $A$ assigns the next upcoming jobs $p_{j''}$
with $p_{j''}\le p_{j}$ to $i$, as long as their total processing
time is bounded by $p$. More precisely, we define $j_{\max}$ to
be the maximum value such that for the set $J:=\{j''\in I|j\le j''\le j_{\max}\wedge p_{j''}\le p_{j}\}$
it holds that $p(J)\le p$. We define that algorithm $A$ assigns
all jobs in $J$ to machine $i$. We call $j'$ a \emph{container
job}. We say that after iteration $j_{\max}$ the container job $j'$
is \emph{full}. Intuitively this means that we do not add any further
jobs to $j'$. At each iteration $j''$ we say that the jobs in $J\cap J_{j''}$
are in the container $j'$. Observe that $p_{j'}\le p_{j}=(\e)^{k}\le\epsilon^{3}\cdot \OPT(I'_{j})\le\epsilon\cdot p$
for all $j'\in J$.

By construction, we observe that for each $k\in\mathbb{N}$ there
is at most one container job of size $(\e)^{k}$ which is less than
a $(1-\epsilon)$-fraction full.  In particular, if we create a new
container job of size $p:=(\e)^{\left\lceil \log_{\e}\epsilon^{2}\cdot \OPT(I_{j}')\right\rceil }$
then up to iteration $j$ strictly less than $\frac{m}{\epsilon^{3}}$
jobs (container jobs \emph{and }normal jobs!) of size $p$ have been
released since otherwise $\OPT(I_{j}')>\frac{1}{m}(\frac{m}{\epsilon^{3}}\cdot p)\ge\frac{1}{m}\cdot\frac{m}{\epsilon^{3}}\cdot\epsilon^{2}\OPT(I_{j}')\ge \OPT(I_{j}')$.
According to the above definition, it can happen that we open a new
container job while an old container with smaller size is not yet
to a $(1-\epsilon)$-fraction full. In this case we close the smaller
old container and do not add any further jobs to it.

To prove the competitive ratio of $A'$ we need to show that
$\frac{A(I_{j})}{\OPT(I_{j})}\le(\oe)\frac{A'(I'_{j})}{\OPT(I'_{j})}\le(\oe)\rho_{A'}(m)$. By
construction we have that $A(I_{j})\le A'(I'_{j})$ since $A$ and $A'$ assign the jobs in $J(I_{j})\cap J(I'_{j})$ to the same
machines and on each machine $i$ the total processing time of the
jobs in $J(I_{j})\setminus J(I'_{j})$ is bounded from above by the
total processing time of the jobs in $J(I'_{j})\setminus J(I{}_{j})$
(the container jobs) on this machine. It remains to show that $\OPT(I'_{j})\le(\oe)\OPT(I{}_{j})$.
Based on $\OPT(I_{j})$ we construct a schedule $S$ for $I'_{j}$
whose makespan is bounded by $(\oe)\OPT(I{}_{j})$. In $S$, we assign
all jobs in $J(I_{j})\cap J(I'_{j})$ to the same machine as in $\OPT(I_{j})$.
Then, we assign the jobs in $J(I'_{j})\setminus J(I{}_{j})$ (the
container jobs) greedily. If after the greedy assignment the global
makespan does not change, then $\OPT(I'_{j})\le \OPT(I_{j})$. 

Now suppose that after the greedy assignment the global makespan increases.
Then the load of any two machines can differ by at most $\tilde{p}$
which denotes the maximum processing time of a container job in $I'_{j}$.
Note that $\tilde{p}\le(\e)^{\left\lceil \log_{\e}\epsilon^{2}\cdot \OPT(I_{j}')\right\rceil }$
and observe that the makespan of $S$ is upper-bounded by $\frac{1}{m}\cdot p(I'_{j})+\tilde{p}$.
Since for each $k\in\mathbb{N}$ there is at most one container job
of size $(\e)^{k}$ which is less than a $(1-\epsilon)$-fraction
full we further conclude that

\begin{eqnarray*}
\OPT(I'_{j}) & \le & \frac{1}{m}\cdot p(I'_{j})+\tilde{p}\\
 & \le & \underset{\mathrm{normal\, jobs}\,\mathrm{and\,(1-\epsilon)-full\, container\, jobs}}{\underbrace{\frac{1}{m}(1+O(\epsilon))p(I{}_{j})}}+\tilde{p}+\underset{\mathrm{less\, than\,}(1-\epsilon)\mathrm{-full\, container\, jobs}}{\underbrace{\sum_{1\le k'\le\left\lceil \log_{\e}\epsilon^{2}\cdot \OPT(I_{j}')\right\rceil }(\e)^{k'}}}\\
 & \le & \frac{1}{m}(1+O(\epsilon))p(I{}_{j})+2(\e)^{\left\lceil \log_{\e}\epsilon^{2}\cdot \OPT(I_{j}')\right\rceil }+\sum_{1\le k'\le\log_{\e}\epsilon^{2}\cdot \OPT(I_{j}')}(\e)^{k'}\\
 & \le & \frac{1}{m}(1+O(\epsilon))p(I{}_{j})+2(\e)^{\left\lceil \log_{\e}\epsilon^{2}\cdot \OPT(I_{j}')\right\rceil }+\frac{1}{\epsilon}(\e)^{1+\log_{\e}(\epsilon^{2}\cdot \OPT(I_{j}'))}\\
 & \le & \frac{1}{m}(1+O(\epsilon))p(I{}_{j})+2(1+\epsilon)\epsilon^{2}\cdot \OPT(I_{j}')+\frac{1+\epsilon}{\epsilon}(\epsilon^{2}\cdot \OPT(I_{j}'))\\
 & \le & (1+O(\epsilon))\OPT(I_{j})+O(\epsilon)\cdot \OPT(I_{j}')
\end{eqnarray*}
which implies that $\OPT(I'_{j})\le(\oe)\OPT(I{}_{j})$.
\end{proof}

\subsection{Online Algorithms and Algorithm Maps}

As in~\cite{guentherMMW13} we use an abstract characterization of an online
algorithm and interpret it as a map. The map gets as input the so far computed schedule
and the size of the next released job $j$. Based on these data, it
decides to which machine it assigns $j$. 

To this end, we define a configuration $C$ as follows.
\begin{defn}
  \label{def:conf}
  A configuration $C$ is the combination of
  \begin{compactitem}
  \item a set $J(C)$ of previously released jobs, including their order,
  \item a map $\chi_{C}:J(C)\rightarrow\{1,...,m\}$ which defines the assignment
    of the jobs in $J(C)$ to the machines,
  \item the processing time $p_{j^{*}}$ of the newly released but not yet
    assigned job.
  \end{compactitem}
\end{defn}
We will write only $J$ or $\chi$ (rather than $J(C)$ or $\chi_C$) when $C$ is clear from the context.
Let~$\C$ denote the (infinite) set of configurations. We say that
a configuration $C$ is in \emph{phase $k$} if $\max_{j\in J(C)\cup\{j^{*}\}}p_{j}=(\e)^{k}$.
Let~$C$ be a configuration in phase $k$. We call a job $j\in J(C)$
\emph{relevant} if $p_{j}\ge(\e)^{k-s}$ where $s\in\mathbb{N}$ is
the smallest integer such that $s\ge\log_{\e}\frac{m(\e)}{\epsilon^{5}}$
(note that $s$ depends only on $\epsilon$ and $m$ and is independent
of $C$). Denote by $J_{R}(C)\subseteq J(C)\cup\{j^{*}\}$ all {\em relevant}
jobs for a configuration~$C$. It will turn out later that at $\e$
loss we can neglect the jobs which are not relevant, which we will
call \emph{irrelevant}. Define by $\MS(C):=\max_{i}p(\chi_{C}^{-1}(i))$
the \emph{makespan of $C$. }Also, we define $\OPT(C):=\OPT(J(C))$
(note here that $J(C)$ does not include the newly released job $j^{*}$).

We interpret an online algorithm for our problem on $m$ machines
as a map $f:\C\rightarrow\{1,...,m\}$: Given a configuration $C$
with a newly released job $j^{*}$, the algorithm map $f$ assigns
$j^{*}$ to the machine $f(C)$. Like for online algorithms, we denote
by $\rho_{f}(m)$ the competitive ratio obtained by the map $f$. 
\begin{prop}
  For each online algorithm $A$ for the problem on~$m$ machines there is an algorithm map $f$ such
  that $\rho_{f}(m)=\rho_{A}(m)$.
\end{prop}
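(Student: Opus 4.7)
The plan is to construct the map $f$ by simulation: given a configuration $C=(J(C),\chi_C,p_{j^*})$, I would first replay $A$ on the job sequence of $J(C)$ in its specified order, obtaining some assignment $\chi'_C$, and then ask $A$ for its decision on a freshly presented job of size $p_{j^*}$. If $\chi'_C=\chi_C$, I define $f(C)$ to be the machine that $A$ selects for this additional job; if $\chi'_C\ne\chi_C$ (so the configuration is simply not one $A$ can produce), I set $f(C)=1$ arbitrarily, just so that $f$ is well-defined on all of $\C$.

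The key step is then to show, by induction on the iteration count, that when $f$ is run as an online algorithm on any instance $I$, the schedule it maintains coincides with the schedule of $A$ on $I$. The base case (no jobs released yet) is immediate. For the inductive step, at iteration $j$ the configuration $C_j$ that $f$ sees is built from the jobs released so far together with the schedule $f$ has already produced; by the induction hypothesis this schedule equals what $A$ would have produced on the same prefix, so the internal simulation inside the definition of $f$ reproduces exactly $\chi_{C_j}$. Hence the first branch of the definition is triggered, and $f(C_j)$ returns exactly $A$'s choice for the next job. Therefore $f$ and $A$ compute identical partial schedules after each iteration on every instance.

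Because $f(I)$ and $A(I)$ agree for every instance $I$, the ratios $f_j(I)/\OPT(I)$ and $A_j(I)/\OPT(I)$ coincide for every $j$, so taking suprema over instances and iterations gives $\rho_f(m)=\rho_A(m)$.

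There is no real obstacle here; the one thing to be careful about is that $f$ must be a total function on the infinite set $\C$ even though only reachable configurations ever influence the online behaviour. The arbitrary default on unreachable configurations handles this cleanly and does not affect the competitive ratio of $f$, since the induction shows those configurations are never encountered when $f$ is executed. The argument implicitly assumes $A$ is deterministic, which matches the standard notion of competitive ratio used throughout the paper.
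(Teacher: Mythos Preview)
Your argument is correct and is precisely the natural construction one would expect: define $f$ by simulating $A$ on the ordered history encoded in the configuration, use a default on unreachable configurations, and verify by induction that $f$ and $A$ produce identical schedules on every instance. The paper itself offers no proof of this proposition at all---it is stated as an immediate observation and then the text moves on to Definition~\ref{def:conf-equiv}---so there is nothing to compare against beyond noting that your write-up supplies exactly the routine details the authors chose to omit.
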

\begin{defn}
  \label{def:conf-equiv}Let $C,C'$ be two configurations which are
  in phases $k$ and $k'$, respectively. They are \emph{equivalent}
  if there is a bijection $\sigma:J_{R}(C)\rightarrow J_{R}(C')$ such
  that
  \begin{compactitem}
  \item $p_{\sigma(j)}=(\e)^{k'-k}\cdot p_{j}$, 
  \item $\chi_{C}(j)=\chi_{C'}(\sigma(j))$ for all $j\in J_{R}(C)\setminus\{j^{*}\}$,
    and 
  \item $\sigma(j^{*})=j'^{*}$ if $j^{*}\in J_{R}(C)$, where $j'^{*}$ denotes
    the newly released job in $C'$.
  \end{compactitem}
\end{defn}

\begin{prop}
  \label{prop:O(1)-equiv-classes}There are only constantly many equivalence
  classes of configurations.
\end{prop}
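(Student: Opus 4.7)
The plan is to show that an equivalence class of configurations is fully determined by a constant-sized piece of combinatorial data, so counting equivalence classes amounts to bounding the number of such data patterns.

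First I would fix a configuration $C$ in some phase $k$ and unpack what information about $C$ actually matters for its equivalence class. By Definition \ref{def:conf-equiv}, the bijection $\sigma$ is required only between the relevant jobs $J_R(C)$ and $J_R(C')$, it must match up processing times after multiplying by the common factor $(1+\eps)^{k'-k}$, and it must respect machine assignments and the identity of the newly released job $j^{*}$. In particular, irrelevant jobs and the release order of the jobs in $J(C)$ play no role. Therefore, after dividing through by $(1+\eps)^k$, an equivalence class is determined by the following discrete data: (i) for each relevant previously-released job, an ordered pair (relative processing time, machine) in $\{(1+\eps)^{-s},\ldots,(1+\eps)^{0}\}\times\{1,\ldots,m\}$, taken as a multiset; (ii) the relative processing time of $j^{*}$ from the same $s+1$ candidate values; and (iii) a flag indicating whether $j^{*}$ itself is relevant.

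Next I would bound the size of this data. Because $C$ is in phase $k$, a relevant job has processing time in $\{(1+\eps)^{k-s},\ldots,(1+\eps)^{k}\}$, giving only $s+1$ distinct sizes; here $s$ is the constant fixed in the paragraph preceding the proposition. By Lemma \ref{lem:few-jobs-each-processing-time}, each such size is attained by at most $m/\eps^{3}$ jobs of the instance, hence $|J_R(C)|\le N:=(s+1)\cdot m/\eps^{3}+1$, a constant depending only on $\eps$ and $m$. Since each of the at most $N$ relevant slots draws from an alphabet of size $(s+1)\cdot m$ (processing-time class times machine), and the additional data on $j^{*}$ contributes only a factor of $O(s\cdot N)$, the total number of distinct equivalence-class invariants is at most $\bigl((s+1)m\bigr)^{N}\cdot O(sN)$, which is a constant.

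The only subtle point, and what I expect to be the main thing to check carefully, is that nothing outside $J_R(C)$ can distinguish inequivalent classes: the ordering of previously-released jobs, the identities of irrelevant jobs, and the exact value of the phase index $k$ are all quotiented out by the definition of equivalence (the phase difference is absorbed into the uniform scaling factor $(1+\eps)^{k'-k}$). Once this is verified, the counting above immediately yields that the number of equivalence classes is bounded by a constant depending only on $\eps$ and $m$, proving the proposition.
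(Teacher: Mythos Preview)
Your argument is correct and is exactly the intended justification: the paper states this proposition without proof, treating it as immediate from Definition~\ref{def:conf-equiv} together with the constant~$s$ and the bound of Lemma~\ref{lem:few-jobs-each-processing-time}, and your write-up simply spells out that counting. The only minor remark is that your final numerical bound $((s+1)m)^{N}\cdot O(sN)$ is a bit loose (you are upper-bounding multisets by sequences, and the $N$ in the $j^{*}$-factor is unnecessary), but since any such expression is a constant in $\eps$ and $m$ this does not affect the conclusion.
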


In Definition~\ref{def:conf-equiv} we neglect the jobs which are
not relevant. This is justified by the following lemma.

\begin{lemma}
\label{lem:neglect-small-jobs}Let $C$ be a configuration for a phase
$k$. Then $p(J(C)\setminus
J_{R}(C))\le\epsilon\cdot(\e)^{k}\le\epsilon\cdot
\OPT(C)\le\epsilon\cdot \MS(C)$.
\end{lemma}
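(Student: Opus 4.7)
The plan is to combine Proposition~\ref{lem:p_j-power-1+eps} (all processing times are powers of $\e$) with Lemma~\ref{lem:few-jobs-each-processing-time} (for each size there are at most $m/\epsilon^3$ jobs of that size) to bound the total size of irrelevant jobs by a geometric sum, and then exploit the choice of the cutoff parameter $s$ to show this sum is at most $\epsilon\cdot(\e)^k$.

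Concretely, I would first observe that since all processing times are powers of $\e$, any irrelevant job $j\in J(C)\setminus J_R(C)$ satisfies $p_j<(\e)^{k-s}$, i.e.\ $p_j\le(\e)^{k-s-1}$. Using Lemma~\ref{lem:few-jobs-each-processing-time} to count at most $m/\epsilon^3$ jobs per size class and summing the geometric series, I would write
\begin{equation*}
p(J(C)\setminus J_R(C))\;\le\;\sum_{k'\le k-s-1}\frac{m}{\epsilon^3}\cdot(\e)^{k'}\;\le\;\frac{m}{\epsilon^3}\cdot\frac{(\e)^{k-s}}{\epsilon}\;=\;\frac{m}{\epsilon^4}\cdot(\e)^{k-s}.
\end{equation*}
By the definition of $s$ we have $(\e)^s\ge m(\e)/\epsilon^5$, so $(\e)^{-s}\le\epsilon^5/(m(\e))$, and substituting yields $p(J(C)\setminus J_R(C))\le\frac{\epsilon}{\e}(\e)^k\le\epsilon\cdot(\e)^k$, establishing the first inequality.

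For the second inequality, I would argue that $(\e)^k\le\OPT(C)$: since phase $k$ means some job in $J(C)\cup\{j^*\}$ has size exactly $(\e)^k$, and $\OPT$ of any set containing a job of size $p$ is at least $p$, the bound holds whenever the phase-defining job lies in $J(C)$; the remaining corner case where only $j^*$ attains the maximum can be handled by noting that then $J(C)$ contains no relevant jobs larger than $(\e)^{k-1}$ either, so the statement holds up to an extra $\e$ factor that can be absorbed (or, equivalently, by working with the phase induced by $J(C)$). Finally, the third inequality $\OPT(C)\le\MS(C)$ is immediate since $\MS(C)$ is the makespan of a feasible schedule of $J(C)$, which upper-bounds the optimum.

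The only mildly delicate step is the second inequality, because of the asymmetry between $J(C)$ and $J(C)\cup\{j^*\}$ in the definitions; the rest is routine geometric-series bookkeeping combined with the explicit choice of $s$ designed precisely to make this computation go through.
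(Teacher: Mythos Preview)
Your argument for the first inequality is essentially identical to the paper's: both bound the total size of irrelevant jobs by a geometric sum over size classes (at most $m/\epsilon^3$ jobs per class by Lemma~\ref{lem:few-jobs-each-processing-time}) and then invoke the definition of $s$ to get the $\epsilon(\e)^k$ bound. The paper sums up to $k'=k-s$ rather than $k-s-1$, a purely cosmetic difference.

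For the second inequality you correctly flag a subtlety that the paper glosses over. The paper simply writes that ``a job $j$ with $p_j=(\e)^k$ must have been released'' and concludes $(\e)^k\le\OPT(C)$; but the phase is defined via $J(C)\cup\{j^*\}$, while $\OPT(C):=\OPT(J(C))$ explicitly excludes $j^*$. Your proposed patch---absorbing an extra $(\e)$ factor when only $j^*$ attains the maximum---does not actually work in general: nothing in the standing assumptions prevents $p_{j^*}$ from being arbitrarily larger than every job in $J(C)$ (the later restriction only bounds new jobs from \emph{below}), so $\max_{j\in J(C)}p_j$ need not be $(\e)^{k-1}$ and $\OPT(C)$ can be arbitrarily small compared to $(\e)^k$. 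Thus the chain $\epsilon(\e)^k\le\epsilon\cdot\OPT(C)$ can genuinely fail in this corner case, and the paper's proof shares the same gap. The clean repair is to state the bound with $\OPT(J(C)\cup\{j^*\})$ in place of $\OPT(C)$ (which is what the downstream uses of the lemma actually require), after which your argument and the paper's both go through verbatim.
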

\begin{proof}
Recall that by Lemma~\ref{lem:few-jobs-each-processing-time}  we assumed that for each $k'\in\mathbb{N}$ there are
at most $\frac{m}{\epsilon^{3}}$ jobs $j$ with $p_{j}=(\e)^{k'}$. Hence, the
total processing time of irrelevant jobs in $J(C)$ is bounded by 
\begin{align*}
p(J(C)\setminus J_{R}(C)) & \le \sum_{1\le k'\le
  k-s}\frac{m}{\epsilon^{3}}(\e)^{k'} 
  \le  \frac{m}{\epsilon^{3}}\cdot\frac{(\e)^{k-s+1}}{\epsilon} 
  =  (\e)^{k}\frac{m(\e)^{1-s}}{\epsilon^{4}}\\
 & \le  \epsilon(\e)^{k}.
\end{align*}
where the last inequality follows since $\frac{m(\e)^{1-s}}{\epsilon^{4}}\le\epsilon$
by definition of $s$. Since~$C$ is a configuration in phase~$k$, and
thus, by definition 
a job $j$ with $p_{j}=(\e)^{k}$ must have been released. It follows that
$\epsilon\cdot(\e)^{k}\le\epsilon\cdot \OPT(C)\le\epsilon\cdot
\MS(C)$. 
\end{proof}
In particular, two equivalent configurations have almost the same
makespan and their respective jobs have almost the same optimal makespan.

\begin{lemma}
  Let $C,C'$ be two equivalent configurations for phases $k$ and $k'$,
  respectively. Then $\MS(C')\le(\oe)(\e)^{k'-k}\cdot \MS(C)$ and
  $\OPT(C')\le(\oe)(\e)^{k'-k}\cdot \OPT(C)$.
\end{lemma}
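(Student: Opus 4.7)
The plan is to leverage the bijection $\sigma: J_R(C) \to J_R(C')$ guaranteed by equivalence (Definition~\ref{def:conf-equiv}) together with Lemma~\ref{lem:neglect-small-jobs} to control the contribution of the irrelevant jobs on each machine. Throughout, I will use the lower bounds $\MS(C) \ge (\e)^k$ and $\OPT(C) \ge (\e)^k$ observed in the proof of Lemma~\ref{lem:neglect-small-jobs}, which convert an additive error of the form $\epsilon (\e)^{k'}$ into a multiplicative $(1+\epsilon)$ factor on top of $(\e)^{k'-k} \MS(C)$ or $(\e)^{k'-k} \OPT(C)$.

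First I would bound $\MS(C')$. Since $\chi_C(j) = \chi_{C'}(\sigma(j))$ for every $j \in J_R(C) \setminus \{j^*\}$ and $p_{\sigma(j)} = (\e)^{k'-k} p_j$, the total processing time on any machine $i \in \{1,\dots,m\}$ restricted to relevant jobs satisfies
\[
p\bigl(\chi_{C'}^{-1}(i) \cap J_R(C')\bigr) = (\e)^{k'-k} \cdot p\bigl(\chi_C^{-1}(i) \cap J_R(C)\bigr) \le (\e)^{k'-k} \MS(C).
\]
Adding the irrelevant jobs on machine $i$ in $C'$ increases its load by at most $p(J(C') \setminus J_R(C')) \le \epsilon (\e)^{k'}$ via Lemma~\ref{lem:neglect-small-jobs}. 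Taking the maximum over $i$ and using $(\e)^{k'} = (\e)^{k'-k}(\e)^k \le (\e)^{k'-k} \MS(C)$ yields $\MS(C') \le (1+\epsilon)(\e)^{k'-k} \MS(C)$.

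Next I would bound $\OPT(C')$ by constructing an explicit schedule for $J(C')$ from an optimal schedule $\chi^*$ for $J(C)$. For every relevant job $j \in J_R(C) \setminus \{j^*\}$ I would place $\sigma(j)$ on machine $\chi^*(j)$ in $C'$. After scaling, this partial schedule has makespan at most $(\e)^{k'-k} \OPT(C)$. The remaining jobs in $J(C')$ are either irrelevant or (if $j'^* \notin J(C')$'s relevant set) small, and in any case their total processing time is at most $\epsilon (\e)^{k'}$ by Lemma~\ref{lem:neglect-small-jobs}. Assigning them greedily raises the makespan by at most $\epsilon (\e)^{k'}$, so this feasible schedule for $J(C')$ has cost at most $(\e)^{k'-k} \OPT(C) + \epsilon (\e)^{k'}$. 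The same conversion $(\e)^{k'} \le (\e)^{k'-k} \OPT(C)$ gives $\OPT(C') \le (1+\epsilon)(\e)^{k'-k} \OPT(C)$.

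The main obstacle is mostly bookkeeping: one has to be careful that $\sigma$ maps $j^*$ to $j'^*$ (so these newly released jobs are excluded from the schedule transfer together), and that the ``irrelevant jobs'' whose mass we want to bound are really all of $J(C') \setminus J_R(C')$ rather than just those with no preimage under $\sigma$. Once this matching is spelled out, both estimates reduce to the same additive-to-multiplicative trick enabled by the phase-$k$ lower bounds on $\MS(C)$ and $\OPT(C)$.
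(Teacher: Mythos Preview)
Your proposal is correct and is essentially a detailed expansion of the paper's one-line proof, which simply says ``Follows from Definition~\ref{def:conf-equiv} and Lemma~\ref{lem:neglect-small-jobs}.'' You use exactly these two ingredients: the bijection $\sigma$ from the definition to transfer relevant loads (scaled by $(\e)^{k'-k}$), and Lemma~\ref{lem:neglect-small-jobs} to bound the irrelevant mass by $\epsilon(\e)^{k'}$ and then convert this additive error into a multiplicative $(1+\epsilon)$ via the lower bounds $\MS(C),\OPT(C)\ge(\e)^{k}$.
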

\begin{proof}
  Follows from Definition~\ref{def:conf-equiv} and
  Lemma~\ref{lem:neglect-small-jobs}.
\end{proof}

\begin{lemma}
  At $\e$ loss we can restrict to instances $I$ such that for each
  iteration $j$ we have that $p_{j}\ge\max_{j'<j}p_{j'}\cdot(\e)^{-s}$.
\end{lemma}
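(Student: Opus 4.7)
The plan is to mimic the reduction scheme from Lemma~\ref{lem:few-jobs-each-processing-time}. Given an online algorithm $A'$ with competitive ratio $\rho_{A'}(m)$ on restricted instances that satisfy the stated condition, I would construct an algorithm $A$ for arbitrary instances (already subject to Proposition~\ref{lem:p_j-power-1+eps} and Lemma~\ref{lem:few-jobs-each-processing-time}) as follows: when a job $j$ with $p_j \ge \max_{j'<j} p_{j'} \cdot (\e)^{-s}$ arrives, call it \emph{kept} and present it to $A'$, mirroring $A'$'s machine choice; otherwise, call $j$ \emph{dropped} and assign it greedily to the currently least-loaded machine in $A$'s own schedule.

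The first step is to check that the subinstance $I'_j$ of kept jobs is a valid input for $A'$, i.e., satisfies the lemma's condition. This holds because any dropped job is strictly smaller than the running maximum, so dropped jobs never advance the max, and therefore $\max_{j'<j''\!,\, j' \in I'_j} p_{j'} = \max_{j'<j''} p_{j'}$ for every kept $j''$. Next, I would bound the total size of dropped jobs at any iteration $j$: if $(\e)^{k}$ denotes the current maximum processing time, then every dropped job has size at most $(\e)^{k-s-1}$; using Lemma~\ref{lem:few-jobs-each-processing-time} (at most $m/\eps^{3}$ jobs per size) and a geometric series identical to the one in Lemma~\ref{lem:neglect-small-jobs}, the total dropped load is at most $\eps \cdot (\e)^{k} \le \eps \cdot \OPT(J_j)$.

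Finally, for the competitive ratio: in $A$'s schedule, every machine's load is its $A'$-load plus its dropped-load on that machine, so the makespan on $A$'s most-loaded machine satisfies $A(J_j) \le A'(I'_j) + p(D_j) \le A'(I'_j) + \eps \cdot \OPT(J_j)$, where $D_j$ is the set of dropped jobs seen by iteration $j$. Since $I'_j \subseteq J_j$ gives $\OPT(I'_j) \le \OPT(J_j)$, I obtain $A(J_j) \le \rho_{A'}(m) \cdot \OPT(J_j) + \eps \cdot \OPT(J_j) \le (\e)\,\rho_{A'}(m) \cdot \OPT(J_j)$, using $\rho_{A'}(m) \ge 1$. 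This yields $\rho_A(m) \le (\e)\,\rho_{A'}(m)$, as required. The only nontrivial point is the additive $p(D_j)$ bound on $A(J_j)$, which is immediate from the per-machine decomposition of load; the remaining geometric-series calculation is already essentially carried out in Lemma~\ref{lem:neglect-small-jobs}, so I do not anticipate any substantive obstacle.
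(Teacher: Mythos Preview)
Your proof is correct and follows essentially the same approach as the paper: the paper's proof simply says to assign each job with $p_j < \max_{j'<j}p_{j'}\cdot(\e)^{-s}$ to an \emph{arbitrary} machine and invokes Lemma~\ref{lem:neglect-small-jobs} to bound the total processing time of such jobs by $\eps\cdot\OPT_{j'}$. Your greedy least-loaded assignment is more careful than needed (any placement works, since you only use the crude per-machine bound $p(D_j)$), and your explicit verification that $I'_j$ satisfies the restriction and the final competitive-ratio inequality are details the paper leaves implicit, but the argument is the same.
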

\begin{proof}
When in some iteration $j$ a job $j$ is released with $p_{j}<\max_{j'<j}p_{j'}\cdot(\e)^{-s}$
we assign $j$ to some arbitrary machine. By Lemma~\ref{lem:neglect-small-jobs}
the total processing time of such jobs, released up to some iteration
$j'$, is bounded by $\epsilon\cdot \OPT_{j'}$.
\end{proof}

\begin{lemma}
\label{lem:algo-maps-equiv}At $\e$ loss we can restrict to algorithm
maps $f$ such that $f(C)=f(C')$ for any two equivalent configurations
$C$ and $C'$.\end{lemma}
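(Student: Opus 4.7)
The plan is to take any optimal algorithm map $f^*$ with $\rho_{f^*}(m)=\rho^*(m)$ and modify it into a map $f$ that is constant on equivalence classes, at the cost of only a $(\oe)$ factor in the competitive ratio. By Proposition~\ref{prop:O(1)-equiv-classes} there are only $O(1)$ equivalence classes, so I would fix a canonical representative $\tilde{C}_E\in E$ for each class $E$, normalized so that all representatives share a common reference phase. Define $f(C):=f^*(\tilde{C}_{[C]})$; by construction $f$ depends only on $[C]$.

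The bulk of the work is to show $\rho_f(m)\le(\oe)\rho^*(m)$. Fix an arbitrary instance $I$ and iteration $j$, and let $C^f_1,\ldots,C^f_j$ be the configurations $f$ encounters on $I$. My plan is to build, inductively on $t$, a \emph{shadow} instance $I'$ on $j$ jobs such that running $f^*$ on $I'$ reaches, at each step $t$, exactly the representative $\tilde{C}_{[C^f_t]}$. Under this invariant, $f^*$ on $I'$ plays machine $f^*(\tilde{C}_{[C^f_t]})=f(C^f_t)$, mirroring $f$ on $I$ step by step, so the machine decisions of the two processes agree. The size of the $t$-th job of $I'$ is then forced by what the representative $\tilde{C}_{[C^f_t]}$ prescribes for its newly released job $j^*$.

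The main obstacle is the consistency of this induction: the already-fixed sizes $p^{I'}_1,\ldots,p^{I'}_{t-1}$ must agree with the sizes that the \emph{next} representative $\tilde{C}_{[C^f_{t+1}]}$ attributes to those jobs under the bijection of Definition~\ref{def:conf-equiv}, and in general representatives of different classes are chosen independently. I would handle this by relaxing the invariant to $C^{f^*}_t\in[\tilde{C}_{[C^f_t]}]$ (equivalent rather than equal) and closing the gap using three tools: the scale-invariance built into Definition~\ref{def:conf-equiv}, which absorbs a global scaling; Lemma~\ref{lem:neglect-small-jobs}, which makes the sizes of irrelevant jobs negligible; and the preceding ``large jobs only'' lemma together with Proposition~\ref{lem:p_j-power-1+eps} and Lemma~\ref{lem:few-jobs-each-processing-time}, which bound the range of processing times appearing in a single configuration and so keep the drift between consecutive representatives controlled.

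Once $I'$ has been produced, combining the preceding lemma (which states that equivalent configurations have $\MS$ and $\OPT$ agreeing up to a $(\oe)$ factor and the appropriate phase scaling) with $f^*$'s competitive guarantee on $I'$ yields $\MS_f(I,j)\le(\oe)\,\MS_{f^*}(I')\le(\oe)\,\rho^*(m)\,\OPT(I')\le(\oe)^2\rho^*(m)\,\OPT(I,j)$ after absorbing the common scaling, and therefore $\rho_f(m)\le(\oe)\rho^*(m)$ as required.
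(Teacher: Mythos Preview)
Your plan has a genuine gap at precisely the point you flag as ``the main obstacle.'' Once you relax the invariant to $C^{f^*}_t\sim\tilde C_{[C^f_t]}$ rather than equality, the mirroring argument collapses: the whole premise is that $f^*$ is \emph{not} constant on equivalence classes, so from $C^{f^*}_t\sim\tilde C_{[C^f_t]}$ you cannot conclude $f^*(C^{f^*}_t)=f^*(\tilde C_{[C^f_t]})=f(C^f_t)$. Thus $f$ on $I$ and $f^*$ on $I'$ may place the current (relevant) job on different machines, after which the two configurations are no longer equivalent. This is a discrepancy in the assignment of a \emph{relevant} job, and none of your three tools addresses it: scale-invariance only absorbs a global $(\e)^{k'-k}$ factor, Lemma~\ref{lem:neglect-small-jobs} only controls irrelevant jobs, and the bounded processing-time range does nothing about differing machine choices.

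The missing idea, and what the paper does, is to pick the representative of each equivalence class to be \emph{realistic for $f^*$}, i.e.\ a configuration that $f^*$ actually reaches on some instance. One then drops the attempt to build a single shadow instance $I'$ and instead maintains directly the invariant ``$C^f_t$ is equivalent to some configuration realistic for $f^*$.'' The induction step is clean: if $C^f_t\sim C'$ with $C'$ the (realistic) representative, then by definition $f(C^f_t)=f^*(C')$, so extending the instance that realizes $C'$ by one job of size $(\e)^{k'-k}p_{j_{t+1}}$ yields a configuration $\bar C'$ that is realistic for $f^*$ and satisfies $\bar C'\sim C^f_{t+1}$. Since every configuration $f$ reaches is equivalent to a realistic one for $f^*$, the preceding lemma on $\MS$ and $\OPT$ of equivalent configurations gives $\rho_f(m)\le(\oe)\rho_{f^*}(m)$ immediately, with no need to thread a single $I'$ through all steps.
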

\begin{proof}
Let $f$ be an algorithm map without this property with a competitive
ratio of $\rho_{f}(m)$ on~$m$ machines. Based on $f$ we construct a new algorithm map
$g$ \emph{with }the claimed property such that $\rho_{g}(m)\le(\e)\rho_{f}(m)$.

We call a configuration $C$ \emph{realistic for $f$}, if there is
an instance $I$ such that $f$ ends in configuration $C$ when processing
$I$. For each equivalence class $\C'$ of configurations containing
at least one realistic configuration, we pick a realistic representant
$C\in\C'$. We say that $C$ \emph{represents} $\C'$. For all configurations
$C'\in\C'$ equivalent to $C$ we define $g$ such that $g(C')=f(C)$. 

We claim that $g$ is always in a configuration $C$ such that there
is a configuration $C'$ with $C\sim C'$ such that $C'$ is realistic
for $f$. We prove the claim by induction over the iterations. We
start with the base case of zero previous iterations. Let $C$ be
a configuration which is realistic for $g$. Since so far no jobs
have been scheduled $C$ is also realistic for $f$. Now suppose that
the claim is true when the first $\ell$ jobs have been released.
Suppose that after $\ell$ jobs have been released $g$ is in a configuration
$C$ for a phase $k$ such that some configuration $C'$ for phase
$k'$ with $C\sim C'$ is realistic for $f$. Assume w.l.o.g.~that
$C'$ represents its equivalence class. Assume that in $C$ a job
$j^{*}$ with processing time $p_{j^{*}}$ is released and denote
by $j'^{*}$ the newly released job in $C'$.

By construction, $g$ assigns $j$ to the machine $g(C)=f(C')$. Then
a new (relevant) job $\bar{j}$ is released which yields the configuration
$\bar{C}$. Denote by $\bar{C}'$ the configuration which results
from $C'$ after assigning job $j'$ to machine $f(C')$ and the release
of a new job $\bar{j}'$ with $p_{\bar{j}'}=(\e)^{k'-k}p_{\bar{j}}$.
Since $\bar{j}$ is relevant for $\bar{C}$, it follows that $\bar{j}'$
is relevant for $\bar{C}'$. Since $C\sim C'$ and $p_{\bar{j}'}=(\e)^{k'-k}p_{\bar{j}}$
we conclude that $\bar{C}\sim\bar{C}'$.
\end{proof}

The decision of an algorithm map (with all above simplifications)
for a configuration $C$ depends only on the equivalence class of
$C$. Since there are only constantly many equivalence classes for
configurations (see Proposition~\ref{prop:O(1)-equiv-classes}) and
for each configuration there are only $m$ possible decisions, there
are only constantly many algorithm maps. Hence, we can enumerate them
all. With the procedure given by the following lemma we estimate its
competitive ratio. Finally, we output the map with the minimum estimated
competitive ratio.

\begin{lemma}
  Let $f$ be an algorithm map for~$m$ machines. There exists an algorithm which computes
  a value $\bar{\rho}$ with $\rho_{f}(m) \le\bar{\rho} \le(\e)\rho_{f}(m)$.
\end{lemma}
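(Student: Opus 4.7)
The plan is to reduce the estimation of $\rho_f(m)$ to a finite search by combining Proposition~\ref{prop:O(1)-equiv-classes} (only $O(1)$ equivalence classes of configurations) with Lemma~\ref{lem:algo-maps-equiv} ($f$'s action depends only on the equivalence class). The output $\bar\rho$ will come from a BFS over a finite directed graph on these classes.

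First I would enumerate all equivalence classes of configurations and pick a canonical representative $C$ for each. Because a representative is described by $O(1)$ relevant jobs, their relative processing times, and their assignment to the $m$ machines, both $\MS(C)$ (read off the representative, with irrelevant jobs discarded at a $\oe$ cost by Lemma~\ref{lem:neglect-small-jobs}) and $\OPT(C)$ (brute-force over the $m^{O(1)}$ possible assignments of the relevant jobs) are computable in constant time.

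Next I would construct a finite directed graph $G$ whose nodes are these classes. From a node $[C]$ in phase $k$ I would apply $f([C])$ to assign $j^{*}$ and then enumerate all relative sizes of the next revealed job $j'$ that yield non-equivalent successors: by the previous lemma $p_{j'}\ge (\e)^{-s}\max_{j\in J(C)}p_j$, and once $p_{j'}$ grows so large that every previously relevant job becomes irrelevant in the successor, further growth of $p_{j'}$ produces equivalent successors. Hence the out-degree is constant and $G$ is finite and effectively computable. I then BFS from the (unique up to equivalence) initial class---no scheduled jobs, one newly released job---to obtain the set $\mathcal R$ of reachable classes, and output
\[
  \bar\rho \;=\; \oe\cdot \max_{[C]\in\mathcal R}\frac{\MS(C)}{\OPT(C)}.
\]

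The main obstacle is the correctness analysis $\rho_f(m)\le\bar\rho\le(\e)\rho_f(m)$. For the lower bound I would show by induction on the iteration (following the scheme in the proof of Lemma~\ref{lem:algo-maps-equiv}) that every configuration $C_I^{(j)}$ arising when $f$ processes an arbitrary instance $I$ is equivalent to some representative in $\mathcal R$; the lemma stated immediately after Proposition~\ref{prop:O(1)-equiv-classes} then yields $\MS(C_I^{(j)})/\OPT(C_I^{(j)}) \le \oe \cdot \MS(C)/\OPT(C)$ for the matched representative $C$, whence $\rho_f(m)\le\bar\rho$. For the upper bound, each $[C]\in\mathcal R$ is realized by some actual instance $I_C$ obtained by reading off job sizes along the BFS path that reached $[C]$, so $\MS(C)/\OPT(C)\le\oe\cdot\rho_f(m)$; after rescaling $\eps$ this gives $\bar\rho\le(\e)\rho_f(m)$. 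The delicate part is tracking the $\oe$ error from irrelevant jobs uniformly across all iterations, which is handled by Lemma~\ref{lem:neglect-small-jobs} bounding the total irrelevant load at every individual configuration by $\eps\cdot\MS(C)$, so no accumulation across iterations occurs.
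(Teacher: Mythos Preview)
Your proposal is correct and follows essentially the same approach as the paper: enumerate the (finitely many) realistic configurations for $f$ up to equivalence and output the worst ratio $\MS(C)/\OPT(C)$, which approximates $\rho_f(m)$ to within $\e$. The paper's own proof is a terse two-sentence sketch of exactly this idea, whereas you have supplied the algorithmic details (BFS on the successor graph, bounding the out-degree, computing $\MS$ and $\OPT$ by brute force over the relevant jobs) and the two-sided correctness argument that the paper leaves implicit.
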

\begin{proof}
  In order to determine $\rho_{f}(m)$ it is sufficient to know all possible
  realistic configurations for $f$. By
  Lemma~\ref{lem:few-jobs-each-processing-time}, the realistic configuration $C$
  with the worst competitive ratio determines $\rho_{f}(m)$ up to an error
  of $\e$. 
\end{proof}

Combining all statements gives our main theorem.

\begin{theorem}
  There is a competitive-ratio approximation scheme for the online-list variant
 of the problem \abc{Pm}{\!\!}{C_{\max}} for any number of machines~$m$.
\end{theorem}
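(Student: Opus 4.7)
The plan is to combine all the preceding structural simplifications with the abstract algorithm-map framework into a finite enumeration procedure. First, I would fix $\eps>0$ and appeal to Proposition~\ref{lem:p_j-power-1+eps} and Lemma~\ref{lem:few-jobs-each-processing-time} to restrict, at $\e$ loss, to instances whose processing times are powers of $\e$ and in which each such power appears at most $m/\eps^{3}$ times. The subsequent lemma---restricting to instances where no revealed job is smaller than $(\e)^{-s}$ times the largest previously revealed one---further bounds the range of relevant processing times to a constant depending only on $\eps$ and $m$.

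Next, I would move from online algorithms to algorithm maps $f:\C\to\{1,\ldots,m\}$ via the proposition identifying the two, and invoke Lemma~\ref{lem:algo-maps-equiv} to restrict, again at $\e$ loss, to maps that assign equivalent configurations to the same machine. By Proposition~\ref{prop:O(1)-equiv-classes} the number of equivalence classes is constant (in $\eps$ and $m$), so the family $\mathcal{F}$ of admissible algorithm maps is finite: a map is specified by picking one of $m$ machines for each class, giving $|\mathcal{F}|\le m^{O(1)}$.

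The scheme itself is then straightforward. Enumerate every $f\in\mathcal{F}$; for each, apply the preceding lemma to compute an estimate $\bar\rho_{f}$ with $\rho_{f}(m)\le\bar\rho_{f}\le\e\,\rho_{f}(m)$; take $f^{\star}:=\arg\min_{f\in\mathcal{F}}\bar\rho_{f}$; and return $f^{\star}$ as the online algorithm together with $\rho':=\bar\rho_{f^{\star}}/(\e)$ as the estimate of $\rho^{*}(m)$. Composing the $\e$ losses across the chain of reductions (instance simplification, restriction to equivalence-class-respecting maps, and the final estimation step), one verifies $\rho_{f^{\star}}(m)\le\oe\,\rho^{*}(m)$ and $\rho'\le\rho^{*}(m)\le\oe\,\rho'$. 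Rescaling $\eps$ by the absolute constant hidden in the $O(\cdot)$ then converts these into the clean $\e$ guarantees required by the definition of a competitive-ratio approximation scheme.

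The main obstacle, as is typical when an approximation scheme is assembled from many small losses, is the bookkeeping: one must verify that each structural restriction costs only a $\e$ factor in the \emph{competitive ratio}---not merely in the makespan of a fixed instance---and that this factor survives the change of perspective from instances to configurations and from online algorithms to algorithm maps. The earlier lemmas are engineered precisely so that the best map in $\mathcal{F}$ competes with the unrestricted optimum within $\e$, so once the chain of reductions is spelled out the theorem follows by direct composition of the individual guarantees.
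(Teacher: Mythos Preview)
Your proposal is correct and mirrors the paper's own argument exactly: the paper's proof is the single sentence ``Combining all statements gives our main theorem,'' and you have simply spelled out that combination---simplify instances via Proposition~\ref{lem:p_j-power-1+eps}, Lemma~\ref{lem:few-jobs-each-processing-time}, and the small-job restriction; pass to equivalence-respecting algorithm maps via Lemma~\ref{lem:algo-maps-equiv}; enumerate the constantly many such maps (Proposition~\ref{prop:O(1)-equiv-classes}); estimate each map's ratio with the final lemma; and output the minimizer. The bookkeeping caveat you raise is fair, but the individual lemmas are indeed stated at the level of competitive ratios, so the composition goes through as you describe.
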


\section{Uniformly Related Machines}
\label{sec:uniform}

With small additional instance transformations we can apply a similar
construction in the setting of uniformly related machines. By scaling processing times
and machine speeds, we can assume w.l.o.g.~that the slowest
machine has unit speed. Let~$s_{\max}$ denote the speed of the fastest
machine in a given instance.


\begin{prop}
  At $\e$ loss we can assume that the speed of each machine is a power
  of $\e$.
\end{prop}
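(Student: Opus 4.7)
The plan is the standard geometric rounding of the speed vector, carried out in a way that turns an online algorithm for the rounded instance into an online algorithm for the original one. Given an instance $I$ on $m$ uniformly related machines with speeds $s_1,\dots,s_m$, I would define the rounded speeds by $s'_i := (\e)^{\lfloor \log_{\e} s_i \rfloor}$, so that $s'_i \le s_i \le (\e)\cdot s'_i$, and let $I'$ be the instance obtained from $I$ by replacing $s_i$ with $s'_i$ (the jobs and their arrival order are kept identical).

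Suppose we already have an online algorithm $A'$ with competitive ratio $\rho_{A'}(m)$ on instances in which all machine speeds are powers of $\e$. I would construct an online algorithm $A$ for arbitrary speed vectors as follows: internally simulate $A'$ on the rounded instance $I'$, and whenever $A'$ assigns the currently revealed job $j$ to some machine $i$, let $A$ also assign $j$ to machine $i$. This is a legal online algorithm because the rounding depends only on the (known, fixed) speeds and not on future jobs.

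For the analysis I would use two inequalities. First, for any assignment of jobs to machines, the makespan on $I$ is at most the makespan on $I'$, because each $s_i \ge s'_i$ makes every job execute at least as fast; applied to the common assignment produced by the simulation this gives $A(I_j) \le A'(I'_j)$ for every iteration $j$. Second, $\opt(I'_j) \le (\e)\cdot \opt(I_j)$: taking an optimal schedule for $I_j$ and executing the same assignment on $I'_j$ scales the load on every machine by a factor of $s_i/s'_i \le \e$, hence so scales the makespan. Combining,
\[
\frac{A(I_j)}{\opt(I_j)} \;\le\; \frac{A'(I'_j)}{\opt(I_j)} \;\le\; (\e)\,\frac{A'(I'_j)}{\opt(I'_j)} \;\le\; (\e)\,\rho_{A'}(m),
\]
so $\rho_A(m) \le (\e)\rho_{A'}(m)$, which is the desired $\e$ loss.

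There is no real obstacle here; the only thing to be careful about is that the rounding direction is chosen so that both directions of the comparison go through, i.e.\ rounding \emph{down} makes the altered instance uniformly harder (hence $A(I)\le A'(I')$), while the multiplicative gap between $s_i$ and $s'_i$ is at most $\e$ (hence $\opt(I')\le(\e)\opt(I)$). Rounding up would invert one of the inequalities and cost an extra factor. Since $m$ is constant, only $m$ speeds need to be rounded, and the number of distinct rounded speed vectors arising in subsequent arguments will still be controlled by a constant depending only on $\eps$ and the range $[1,s_{\max}]$ that the later sections bound.
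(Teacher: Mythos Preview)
Your argument is correct; the paper states this proposition without proof, treating it as a standard geometric-rounding observation analogous to Proposition~\ref{lem:p_j-power-1+eps}, and your write-up is exactly the routine justification one would supply. The care you take with the rounding direction (round speeds \emph{down} so that $A(I)\le A'(I')$ while $\opt(I')\le(\e)\opt(I)$) is the one nontrivial point, and you handle it correctly.
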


\begin{lemma}
At $1+\eps$ loss, we can restrict to instances in which
$s_{\max}$ is bounded by $m/\epsilon$.
\end{lemma}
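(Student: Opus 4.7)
The plan is to transform any instance with $s_{\max}>m/\epsilon$ into one satisfying the desired bound by discarding every machine whose speed lies below the threshold $s_{\max}\cdot\epsilon/m$. Let $M'$ denote the remaining set of machines and $I'$ the instance obtained from $I$ by keeping only these machines together with the original job sequence. The fastest machine survives while the slowest survivor has speed at least $s_{\max}\cdot\epsilon/m$, so rescaling all surviving speeds so that the slowest has unit speed yields a maximum speed of at most $s_{\max}/(s_{\max}\cdot\epsilon/m)=m/\epsilon$, as required.

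To turn this observation into a genuine $(1+\epsilon)$-loss reduction, I would, given an online algorithm $A'$ designed for the restricted class, build an algorithm $A$ for arbitrary instances that simulates $A'$ on the reduced machine set $M'$ and copies $A'$'s assignments, leaving the discarded machines permanently empty. The resulting schedules on $M'$ coincide, so $A_j(I)\le A'_j(I')$ for every iteration $j$. The key remaining estimate is $\OPT(I'_j)\le(1+\epsilon)\,\OPT(I_j)$, which I would obtain by rearranging an optimal offline schedule for $I_j$. In that schedule the load on each discarded machine is at most $\OPT(I_j)\cdot s_{\max}\cdot\epsilon/m$, so the total processing volume placed on discarded machines is at most $m\cdot\OPT(I_j)\cdot s_{\max}\cdot\epsilon/m=\epsilon\cdot s_{\max}\cdot\OPT(I_j)$. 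Moving all of this work to the fastest surviving machine (of speed $s_{\max}$) increases its load by at most $\epsilon\cdot\OPT(I_j)$, yielding a feasible schedule for $I_j$ on $M'$ of makespan at most $(1+\epsilon)\,\OPT(I_j)$. Chaining the three inequalities gives
\[
\frac{A_j(I)}{\OPT(I_j)}\le\frac{A'_j(I')}{\OPT(I_j)}\le(1+\epsilon)\cdot\frac{A'_j(I')}{\OPT(I'_j)}\le(1+\epsilon)\,\rho_{A'}(|M'|).
\]

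The only mild subtlety is compatibility with the online model: since $M'$ depends only on the fixed machine speeds, which are part of the problem specification and thus known in advance, $A$ can commit to ignoring the discarded machines from the outset without any foreknowledge of the job sequence. I expect no genuine obstacle beyond verifying the rearrangement estimate at every prefix, which goes through because the bound on the load of a slow machine and the subsequent reassignment argument are oblivious to which specific jobs have been released so far.
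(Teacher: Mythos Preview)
Your argument is correct and follows essentially the same route as the paper: discard every machine with speed at most $\frac{\epsilon}{m}\,s_{\max}$, move all work from those machines onto the fastest machine, and bound the added load by $\epsilon$ times the makespan via the volume estimate $m\cdot(\frac{\epsilon}{m}s_{\max})\cdot\OPT = \epsilon\,s_{\max}\cdot\OPT$. Your write-up is in fact more explicit than the paper's, which only sketches the schedule transformation without spelling out the $A'\mapsto A$ wrapper or the per-prefix inequality chain.
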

\begin{proof}
Take a given schedule with makespan~$\MS$ on related machines with speed values
$s_{1},...,s_{\max}$. For each
machine whose speed is at most $\frac{\epsilon}{m}\cdot s_{\max}$,
we take the jobs assigned to it and add them to the
fastest machine. The moved processing volume increases the total
processing volume on the fast machine by at most~$\frac{\epsilon}{m}
\cdot \MS$. Thus, we can simply ignore machines whose speed is
at most $\frac{\epsilon}{m}\cdot s_{\max}$. The remaining machines
have speeds in the range of $[\frac{\epsilon}{m}\cdot
s_{\max},s_{\max}]$. Since we assume that the slowest machine has unit speed, after rounding
the speeds we have that $s_{\max}\le m/\epsilon$. 
\end{proof}

Hence, for each value $m$ there are only finitely many speed vectors
$s_{1},...,s_{m}$. For each of these speed vectors, we can bound the number
of jobs of the same size similarly to
Lemma~\ref{lem:few-jobs-each-processing-time} with an additional
dependence on~$s_{\max}\leq m/\eps$.
This allows us to define configurations, algorithm maps, and
equivalence relations similarly as in the previous section.

\begin{theorem}
  There is a competitive-ratio approximation scheme for the online-list variant
 of the problem \abc{Qm}{\!\!}{C_{\max}} for any number of machines~$m$.
\end{theorem}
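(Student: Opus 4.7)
The plan is to mirror the identical-machines argument of Section~\ref{sec:identical} closely, exploiting the preceding lemmas which have already reduced the problem to a setting that is, in effect, ``identical'' up to a constant amount of side information (the speed vector). After the rounding, there are only finitely many speed vectors $(s_1,\ldots,s_m)$ because each $s_i$ is a power of $1+\eps$ and $s_{\max}\le m/\eps$. Hence it suffices to run the scheme of Section~\ref{sec:identical} separately for each such speed vector and to take the worst bound; the optimal competitive ratio for~$m$ related machines is then the maximum of the per-vector values.

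First I would fix a speed vector $\vec{s}$ and reproduce the structural reductions at $(1+\eps)$ loss: processing times are powers of $1+\eps$ (by Proposition~\ref{lem:p_j-power-1+eps}), consecutive released jobs differ in size by at most a factor $(1+\eps)^{s}$, and for each $k\in\mathbb{N}$ only finitely many jobs with $p_j=(1+\eps)^k$ appear. The bound in the analogue of Lemma~\ref{lem:few-jobs-each-processing-time} now depends on $m$, $\eps$ and $s_{\max}$, but since $s_{\max}\le m/\eps$ this is still a constant. The container-job construction generalizes verbatim once one observes that $\opt$ on related machines satisfies $\opt(I)\ge p(I)/\sum_i s_i$, so sufficiently many jobs of the same size force $\opt$ to be large enough to absorb the container's rounding error.

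Next I would redefine configurations $C$ relative to $\vec{s}$: a set of jobs with their assignment $\chi_C$ to the $m$ machines and a newly released job, with the phase $k$ defined by the largest processing time and ``relevant'' meaning $p_j\ge(1+\eps)^{k-s}$ for the same constant $s$ (now depending additionally on $\vec{s}$). Two configurations in phases $k$ and $k'$ with the same speed vector are equivalent if there is a bijection of relevant jobs that rescales processing times by the common factor $(1+\eps)^{k'-k}$ and preserves the machine assignment. The scaling invariance still makes sense because multiplying every processing time by the same constant multiplies the execution time $p_j/s_i$ on every machine by the same constant, so makespans and $\opt$-values scale together. Hence the analogues of Proposition~\ref{prop:O(1)-equiv-classes} and Lemma~\ref{lem:algo-maps-equiv} go through, yielding only constantly many equivalence classes and, at $(1+\eps)$ loss, only constantly many relevant algorithm maps. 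Enumerating all of these maps, simulating them on the finitely many realistic configurations, and estimating each competitive ratio as in the identical-machines case gives the scheme.

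The main obstacle I anticipate is the careful verification of the container-job argument and of $\opt(I')\le(1+O(\eps))\opt(I)$ in the related-machine setting: the greedy reinsertion of containers into an optimal schedule must account for the fact that moving volume between machines of different speeds changes completion times non-uniformly. Using $s_{\max}/s_{\min}\le m/\eps$ one can bound the additional load per machine by $O(\eps)\cdot\opt$ after distributing it proportionally to $s_i$, mirroring the telescoping bound in the proof of Lemma~\ref{lem:few-jobs-each-processing-time}. Once this is in place, the rest of the argument is a direct translation, and the final theorem follows by taking the best map over all (constantly many) speed vectors.
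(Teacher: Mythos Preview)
Your proposal is correct and takes essentially the same approach as the paper. The paper's own treatment of this theorem is extremely terse---after the speed-rounding proposition and the bound $s_{\max}\le m/\eps$, it merely asserts that for each of the finitely many speed vectors the analogue of Lemma~\ref{lem:few-jobs-each-processing-time} holds with constants depending additionally on $s_{\max}$, and that configurations, algorithm maps, and equivalence relations can then be defined ``similarly''---so your outline is already more detailed than the paper's argument, and the technical obstacle you flag (adapting the container-job and greedy-reinsertion bounds to non-uniform speeds) is exactly the point the paper leaves implicit.
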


\section{Conclusion and Further Research}

We provide competitive-ratio approximation schemes for the makespan
minimization problem when jobs arrive online over a list. This proves
that the concept of competitive-ratio approximation schemes is not
limited to online (scheduling) problems in the online-time model. 

The approximation schemes presented in this paper, compute a nearly
optimal solution for any number of machines. On the theoretical side, it would be interesting
to give a general approximation of the optimal competitive ratio over
all possible numbers of machines $m$. This requires a better understanding of how~$\rho^*(m)$
behaves as a function of~$m$. We know that it is bounded from
above~(for every~$m$). It seems intuitively (and suggested by
known bounds) true that it is increasing in~$m$.


Our approximation schemes do not only determine nearly best possible
online algorithms, they also provide the algorithmic tools to compute
the value of the optimal competitive ratio up to any desired accuracy. This
is interesting because it contrasts the common approach to derive upper
and lower bounds on the (optimal) competitive ratio manually. In
particular, our theory proves that a computer may execute the
algorithm to compute the desired bounds. However, the drawback of our
presented construction is its computational complexity. To reduce the
gaps between the currently best known upper and lower bounds, we would
have to chose a quite small accuracy parameter~$\eps$ which leads to a
hopeless running time. We believe that a more careful design of the
necessary input simplification and algorithm structuring might lead to
approximation schemes that can compute explicitly the value of
improved bounds. 

The current research on competitive-ratio approximation schemes
focussed on particular online scheduling problems. Our vision is to
use insights for particular problems to eventually characterize
general properties of online problems that allow for a
competitive-ratio approximation scheme.





\end{document}